\numberwithin{equation}{section}
\numberwithin{figure}{section}
\numberwithin{table}{section}
\theoremstyle{plain}
\newtheorem{thm}{\protect\theoremname}
\theoremstyle{remark}
\newtheorem{rem}[thm]{\protect\remarkname}
\providecommand{\remarkname}{Remark}
\providecommand{\theoremname}{Theorem}
\begin{document}

\title{Helminth Dynamics: Mean Number of Worms}
\maketitle

\section*{Full Title: Helminth Dynamics: Mean Number of Worms, Reproductive
Rates}

(Appeared in the Handbook of Statistics, Volume 36 Elsevier/North-Holland,
Amsterdam, 2017, Disease Modelling and Public Health. Part A, 397\textendash 404.
\textbf{AMS: MR3838253})

$ $\begin{center}

\vspace{0.6cm}

\textbf{Arni S.R. Srinivasa Rao}\footnote{Corresponding author}

Medical College of Georgia and College of Science and Mathematics,

Augusta University, 1120 15th Street, Augusta, GA 30912, USA

Email: arrao@gru.edu

\end{center}

\vspace{0.2cm}

\begin{center}

\vspace{0.6cm}

\textbf{Roy M. Anderson, FRS FMedSci}

London Centre for Neglected Tropical Diseases, 

Department of Infectious Disease Epidemiology, 

School of Public Health, Imperial College London, 

Faculty of Medicine, Norfolk Place London W2 1PG, UK

Email: roy.anderson@imperial.ac.uk

\end{center}

\vspace{0.2cm}
\begin{abstract}
Understanding the mean number of worms and burden of soil transmitted
helminth infections are considered as important parameters in formulating
treatment strategies to eliminate worms among children who are effected
by helminth infections\textbf{ \cite{RMA-PLOS-NTD2015}}. We derive
mean number of worms in a newly helminth infected population before
secondary infections are started (population is closed). Further we
bring analytical solutions. We also theoretically demonstrate computing
net reproductive rates within and outside a human host.
\end{abstract}

\keywords{Key words: worm density, measurable functions, disease modeling,
chemotherapy, treatment\textbf{ MSC:} 92D30.}

\tableofcontents{}

\section{Mean number of worms}

Infection of helminthiasis or simply helminth can cause severe damage
to health of children and their childhood behaviour, for example poor
attendance in schools, etc \cite{RMA-PLOS-NTD}. A general description
of infectious disease epidemiology of helminths for example for hookworms,
and density-dependent fecundity and mortality models are described
in \cite{RMA}. Mean worm burden is one of the key epidemiological
parameter in treatment formulations among children suffering with
helminth infections\textbf{ \cite{RMA-PLOS-NTD2015,Tuscott2014,Turner2015,Roy-PhilTrans2014}}.
Moreover, mean worm burden is often considered as an important parameters
in treatment and control of parasites in wild life \cite{Fenton2015,Domke2013,Kreisinger2015}.
In this paper, we treat worm burden as a function of worm reproductive
rates and mean number of worms. For computation of mean number of
worms within a host there are no directly available mathematical functions,
and we try to theoretically understand the mean number and reproduction
of worms within a host and present a theoretical analysis. In this
section we derive formulae for the mean number of worms at the host
level and at the population level. We obtain mean number of worms
in the host population by treating population aging over the period,
i.e. treating both time and age as dynamic. We assume no chemotherapy
scenario at first and then introduce chemotherapy for studying disease
dynamics. Populations means are derived from the individual host worm
densities. 

\subsection{Cross sectional mean }

Let $M(t)$ be the mean number of worms present in the host population
at time $t.$ We compute, $M(t)$ as below:

\begin{equation}
M(t)=\sum_{i=1}^{N}\frac{\int_{0}^{\omega}\left[k_{i}(x,t)H_{i}\left(x,t\right)\Lambda_{i}\left(x,t\right)\right]dx}{\int_{0}^{\omega}k_{i}(x,t)dx}\label{eq:meanworm}
\end{equation}
\[
\]

Where $H_{i}(x,t)$ is the number of worms in a host who is of age
$x$ at time $t$ in the $i^{th}$ sub-population ($H(x,t)dx$ is
differential number of hosts between $x$ and $x+dx$ at time $t$);
$\Lambda_{i}(x,t)$ is net growth of worms in $i^{th}$ sub-population
of age $x$ at time $t$; $k_{i}(x,t)$ are weights for age x at time
$t$; $N$ is size of the human population sub-types, $\omega$ is
age of humans until they are at risk of keeping helminth worms. $\int_{0}^{\omega}H_{i}(x,t)dx$
is total number of hosts and $\int_{0}^{\omega}H_{i}(x,t)\Lambda_{i}(x,t)dx$
is net worms present in $i^{th}$ sub-population at time $t$. When
we divide age range $[0,\omega)$ into smaller age intervals at lengths,
$a_{1},$ $a_{2}-a_{1},...,$ $\omega-a_{k}$, the mean number of
worms in the equation (\ref{eq:meanworm}) is written as follows:

\begin{eqnarray}
M(t) & = & \sum_{i=1}^{N}\left[\frac{\int_{0}^{a_{1}}\left[k_{i}(x,t)H_{i}\left(x,t\right)\Lambda_{i}\left(x,t\right)\right]dx}{\int_{0}^{a_{1}}k_{i}(x,t)dx}+\right.\nonumber \\
 &  & \quad\frac{\int_{a_{1}}^{a_{2}}\left[k_{i}(x,t)H_{i}\left(x,t\right)\Lambda_{i}\left(x,t\right)\right]dx}{\int_{a_{1}}^{a_{2}}k_{i}(x,t)dx}\nonumber \\
 &  & \quad+\cdots+\left.\frac{\int_{a_{k}}^{\omega}\left[k_{i}(x,t)H_{i}\left(x,t\right)\Lambda_{i}\left(x,t\right)\right]dx}{\int_{a_{k}}^{\omega}k_{i}(x,t)dx}\right]\label{eq:meanbyagegroup}
\end{eqnarray}

We define $H_{i}(x,t)=\begin{cases}
\begin{array}{cc}
P_{i}(t-x)\pi(0,x) & \mbox{for }x<t\\
H_{i}(x-t,0)\pi(x-t,x) & \mbox{for }x\geq t
\end{array}\end{cases}$

here $P_{i}(x-t)$ is births to hosts in the age $x-t$, $\pi(0,x)$
is probability that a newly born individual will survive up to age
$x,$ $\pi(x-t,x)$ is probability that a individual of age $x-t$
will survive up to age $x.$ By this definition, equation \ref{eq:meanbyagegroup}
will become 
\begin{eqnarray}
M(t) & = & \sum_{i=1}^{N}\left[\frac{\int_{0}^{a_{1}}\left[k_{i}(x,t)P_{i}\left(a_{1}-x\right)\pi(0,a_{1})\Lambda_{i}\left(x,t\right)\right]dx}{\int_{0}^{a_{1}}k_{i}(x,t)dx}+\right.\nonumber \\
 &  & \quad\frac{\int_{a_{1}}^{a_{2}}\left[k_{i}(x,t)P_{i}\left((a_{2}-a_{1}+x\right)\pi(0,a_{2}-a_{1})\Lambda_{i}\left(x,t\right)\right]dx}{\int_{a_{1}}^{a_{2}}k_{i}(x,t)dx}\nonumber \\
 &  & \quad+\cdots+\frac{\left.\int_{a_{k-1}}^{a_{k}}\left[k_{i}(x,t)P_{i}\left(a_{k}-a_{k-1}+x\right)\pi(0,a_{k}-a_{k-1})\Lambda_{i}\left(x,t\right)\right]dx\right]}{\int_{a_{k-1}}^{a_{k}}k_{i}(x,t)dx}\nonumber \\
\label{eq:3}
\end{eqnarray}
 We have obtained equation \ref{eq:3} by assuming 
\[
\frac{\int_{a_{k}}^{\omega}\left[k_{i}(x,t)H_{i}\left(x-t,0\right)\pi(x-t,x)\Lambda_{i}\left(x,t\right)\right]dx}{\int_{a_{k}}^{\omega}k_{i}(x,t)dx}=0.
\]

\subsection{Cohort mean}

Suppose we are following helminth infected hosts at time $t$ in $i^{th}$
sub-population, $(say,P_{i}).$ Denote by $M_{i}^{*}(t)$ for net
number of worms produced by $i^{th}$ sub-population, which is expressed
by the integral, $ $$\int_{0}^{\omega}H_{i}(x,t)\Lambda_{i}(x,t)dx$.
Then the net number of worms produced during $t$ to $t+h_{1}$ ,
$t+h_{1}$ to $t+h_{2}$,$...,$$t+h_{N}$ to $\omega$ are

$\int_{t}^{t+h_{1}}\int_{0}^{\omega}H_{i}(x,t)\Lambda_{i}(x,t)dxds$,

$\int_{t+h_{1}}^{t+h_{2}}\int_{0}^{\omega}H_{i}(x,t)\Lambda_{i}(x,t)dxds$
,$\cdots\mbox{, }\mbox{ }$

$\int_{0}^{t_{N}+\delta}\int_{0}^{\omega}H_{i}(x,t)\Lambda_{i}(x,t)dxds.$ 

Each double integral indicates net worms observed during a time interval
indicated. The last double integral is where maximum possible net
worms produced as in a logistic growth model with a variable $M^{*}$
and with carrying capacity $\int_{t+h_{N}}^{\omega}\int_{0}^{\omega}H_{i}(x,t)\Lambda_{i}(x,t)dxds$,
then the growth rate, $r_{i}^{*}$ of worms for the entire period
for $i^{th}$ sub-population is

\begin{eqnarray}
r_{i}^{*} & = & \frac{1}{t_{N}+\delta-t}log\left[\frac{M^{*}\left(\int_{0}^{t}\int_{0}^{\omega}H_{i}(x,t)\Lambda_{i}(x,t)dxds-\int_{0}^{t_{N}+\delta}\int_{0}^{\omega}H_{i}(x,t)\Lambda_{i}(x,t)dxds\right)}{\int_{0}^{t}\int_{0}^{\omega}H_{i}(x,t)\Lambda_{i}(x,t)dxds\left(M^{*}-\int_{0}^{t_{N}+\delta}\int_{0}^{\omega}H_{i}(x,t)\Lambda_{i}(x,t)dxds\right)}\right]\nonumber \\
\label{eq:1}
\end{eqnarray}

Under the Lyapunov stability set-up, we explain carrying capacity
as, for each time interval $0$ to $t_{n}+\delta$ for $n=1,2,3,...$,
we define

\begin{equation}
\int_{0}^{t_{n}+\delta}\int_{0}^{\omega}H_{i}(x,t)\Lambda_{i}(x,t)dxds\label{2}
\end{equation}

as cumulative number of net worms present in the $i^{th}$ sub-population
during $0$ to $t_{n}+\delta$ for some $\delta>0$ and $n=1,2,3,...$.
For some positive integer $N$, we will have condition,

\[
\left|\int_{0}^{t_{n}+\delta}\int_{0}^{\omega}H_{i}(x,t)\Lambda_{i}(x,t)dxds-\int_{0}^{t_{N}+\delta}\int_{0}^{\omega}H_{i}(x,t)\Lambda_{i}(x,t)dxds\right|<g
\]

\begin{equation}
\implies\int_{0}^{t_{n}+\delta}\int_{0}^{\omega}H_{i}(x,t)\Lambda_{i}(x,t)dxds\rightarrow\int_{0}^{t_{N}+\delta}\int_{0}^{\omega}H_{i}(x,t)\Lambda_{i}(x,t)dxds\label{3}
\end{equation}

whenever $n\geq N$ and for every $g>0.$ For the population weights
$k(P_{i})$, the mean number of worms present in the population is 

\begin{eqnarray}
M(t) & = & \sum_{i=1}^{N}\frac{\int_{0}^{t+h_{N}}\int_{0}^{\omega}k(P_{i})H_{i}(x,t)\Lambda_{i}(x,t)dxds}{\int_{0}^{N}k(P_{i})di}\nonumber \\
\label{eq:4}
\end{eqnarray}

\subsection{Theorems on worm growth potential in hosts }
\begin{thm}
$ $\label{theorem1}$F_{1}$ \emph{is a }measurable function\emph{,
where $F_{1}$ is defined as,}

\begin{equation}
F_{1}:\left(-\infty,t_{x}+\delta\right)\rightarrow\int_{0}^{t_{x}+\delta}\int_{0}^{\omega}H_{i}(x,t)\Lambda_{i}(x,t)dxds\label{eq:5-1}
\end{equation}

for $x=1,2,3,...$ and $\delta>0.$ 
\end{thm}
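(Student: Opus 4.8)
The plan is to read $F_1$ not as a constant but as the cumulative net-worm function of its upper time limit: writing $u$ for the point of the domain $\left(-\infty,t_x+\delta\right)$ and setting $\phi_i(s)=\int_0^\omega H_i(x,s)\Lambda_i(x,s)\,dx$, the content of the theorem is that
\[
F_1(u)=\int_0^u \phi_i(s)\,ds
\]
is a measurable function of $u$. (The two uses of $x$, as the discrete index $x=1,2,3,\dots$ and as the age of integration, are distinct; I keep $u$ for the free variable to avoid the clash.) The proof then splits naturally into two parts: show that the inner slice integral $\phi_i$ is a well-defined measurable function of $s$, and then that its indefinite integral $F_1$ is measurable.

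First I would fix the measure-theoretic setting, equipping $[0,\omega]\times\left(-\infty,t_x+\delta\right)$ with the completion of planar Lebesgue measure. The hypothesis to record is that the integrand $(x,s)\mapsto H_i(x,s)\Lambda_i(x,s)$ is jointly measurable and locally integrable. This is inherited from the piecewise definition of $H_i$ given above: the survival probabilities $\pi(0,x)$ and $\pi(x-t,x)$ are monotone in their arguments, the birth terms $P_i$ and the growth factor $\Lambda_i$ are continuous (or at worst piecewise continuous) in $(x,t)$, so their product is a measurable function of $(x,s)$, and on the compact age range it is bounded, hence integrable in $x$ for each fixed $s$. Establishing this joint measurability is the only substantive step.

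Next I would invoke Tonelli's theorem for the nonnegative parts (or Fubini after splitting the integrand into positive and negative parts) to conclude that $\phi_i(s)=\int_0^\omega H_i(x,s)\Lambda_i(x,s)\,dx$ is itself measurable in $s$ and locally integrable. The final step is then standard Lebesgue theory: the indefinite integral $u\mapsto\int_0^u\phi_i(s)\,ds$ of a locally integrable function is absolutely continuous on every bounded subinterval, hence continuous, and every continuous function is Borel, so Lebesgue, measurable. If one is willing to impose the biological sign convention $H_i\Lambda_i\ge 0$, the argument shortens further: $F_1$ is then non-decreasing, and monotone functions are automatically Borel measurable, which gives the conclusion with no appeal to absolute continuity at all.

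I expect the genuine obstacle to be confined to the first part, namely pinning down precisely which regularity on $H_i$ and $\Lambda_i$ guarantees joint measurability and local integrability of the product; once that is granted, Tonelli and the continuity of the indefinite integral finish the proof mechanically. One further point deserves a remark: the domain includes the unbounded ray $(-\infty,0)$, on which the model does not define the integrand, so $F_1$ there needs a convention. The natural one is $F_1(u)=0$ for $u\le 0$ (no worms accumulated before the initial time), which is continuous across $u=0$ and leaves $F_1$ measurable on the whole domain.
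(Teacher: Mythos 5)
Your proposal is correct (modulo the regularity hypotheses you state explicitly) but it formalizes and proves the statement along a genuinely different, and more careful, route than the paper. The paper reads $F_1$ as a map on the family of intervals $\left\{(-\infty,t_x+\delta):x=1,2,3,\dots\right\}$, sending each interval to the numerical value $\int_0^{t_x+\delta}\int_0^\omega H_i\Lambda_i\,dx\,ds$; its entire proof consists of asserting that $\lim F_1$ along this family of intervals equals $F_1$ evaluated at the limiting interval, and concluding ``continuous, hence measurable.'' No topology or $\sigma$-algebra on the family of intervals is specified, and the limit interchange (continuity of the double integral in its upper time limit) is asserted rather than derived, so the one analytic fact doing all the work is left unproven. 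You, by contrast, read $F_1$ as the cumulative integral $u\mapsto\int_0^u\phi_i(s)\,ds$ of a real variable and decompose the claim into three standard steps: joint measurability and local integrability of $H_i\Lambda_i$ (which you correctly flag as the only substantive hypothesis, and which the paper uses implicitly without stating), Tonelli to get measurability of the slice integral $\phi_i$, and absolute continuity of the indefinite integral to get continuity, hence Borel measurability, of $F_1$. This buys rigor the paper lacks, makes the hidden hypotheses on $H_i$ and $\Lambda_i$ visible, and handles the ray $(-\infty,0)$ where the paper's integrand is undefined. Your shortcut under the sign convention $H_i\Lambda_i\ge 0$ (monotone functions are Borel measurable) is also the version most consistent with how the paper itself later uses this theorem: its proof of the subsequent convergence theorem invokes precisely the fact that $F_1$ is ``monotonic and measurable'' to apply monotone convergence, so your monotonicity observation is not just an alternative but the property the paper actually needs downstream.
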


\begin{proof}
Observe that $F_{1}$ maps each interval from the set

\[
\left\{ \left(-\infty,t_{x}+\delta\right):x=1,2,3,...\mbox{ and \ensuremath{\delta>0}}\right\} 
\]

to a function in the set

\[
\left\{ \left(\int_{0}^{t_{x}+\delta}\int_{0}^{\omega}H_{i}(x,t)\Lambda_{i}(x,t)dxds\right):x=1,2,3,..\mbox{ and \ensuremath{\delta}}>0.\right\} .
\]

Note that for some arbitrary $k$,

\begin{eqnarray*}
\underset{\left(-\infty,t_{x}+\delta\right)\rightarrow\left(-\infty,t_{k}+\delta\right)}{\lim\,F_{1}} & = & \underset{\left(-\infty,t_{x}+\delta\right)\rightarrow\left(-\infty,t_{k}+\delta\right)}{\lim\,}\left[\left(\int_{0}^{t_{x}+\delta}\int_{0}^{\omega}H_{i}(x,t)\Lambda_{i}(x,t)dxds\right)\right]\\
 & = & \left(\int_{0}^{t_{k}+\delta}\int_{0}^{\omega}H_{i}(x,t)\Lambda_{i}(x,t)dxds\right).
\end{eqnarray*}

Hence, $F_{1}$ is continuous and $F_{1}$ is a \emph{measurable.}
\end{proof}
\begin{rem}
$F_{1}\left\{ \bigcup_{x=1}^{\infty}\left(0,t_{x}+\delta\right)\right\} =F_{1}\left(0,t_{N}+\delta\right)$
for the description of $N$ in section 1.2.
\end{rem}

$ $
\begin{rem}
Suppose $M*$ be the worms observed during time intervals $\left(0,t_{x}+\delta\right)$
for $x=1,2,3,...$ , then a class $\mathcal{G}$ by the below notation 

\begin{eqnarray*}
G_{1} & = & \bigcup_{x=1}^{1}F_{1}\left(0,t_{x}+\delta\right)\\
G_{2} & = & \bigcup_{x=1}^{2}F_{1}\left(0,t_{x}+\delta\right)\\
\vdots & \vdots & \vdots\\
G_{N} & = & \bigcup_{x=1}^{N}F_{1}\left(0,t_{x}+\delta\right)\\
\vdots & \vdots & \vdots
\end{eqnarray*}

is $\sigma-$ algebra. 
\end{rem}

\begin{thm}
\begin{eqnarray*}
\lim_{x\rightarrow\infty}\left(\int_{0}^{t_{x}+\delta}\int_{0}^{\omega}H_{i}(x,t)\Lambda_{i}(x,t)dxds\right) & =\\
\left(\int_{0}^{t_{N}+\delta}\int_{0}^{\omega}H_{i}(x,t)\Lambda_{i}(x,t)dxds\right)
\end{eqnarray*}
\end{thm}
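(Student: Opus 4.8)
The plan is to recognize this statement as the formal restatement of the Lyapunov-type convergence already asserted in equation (\ref{3}) of Section 1.2, and to prove it by combining the continuity of $F_{1}$ (Theorem \ref{theorem1}) with a monotone-convergence argument. First I would introduce the shorthand
\[
S_{n}:=\int_{0}^{t_{n}+\delta}\int_{0}^{\omega}H_{i}(x,t)\Lambda_{i}(x,t)\,dx\,ds=F_{1}\left(-\infty,t_{n}+\delta\right),
\]
so that the claim becomes $\lim_{n\to\infty}S_{n}=S_{N}$ (I would rename the sequence index from $x$ to $n$ to avoid collision with the inner integration variable). Because the half-lines $(-\infty,t_{n}+\delta)$ are nested increasingly in $n$ as $t_{1}<t_{2}<\cdots$, the cumulative net-worm counts satisfy $S_{n}\le S_{n+1}$; this monotonicity is the cumulative analogue of the fact that net worms produced up to a later time dominate the amount produced up to an earlier time.

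Next I would establish boundedness. Under the logistic-growth interpretation introduced before equation (\ref{eq:1}), the quantity $S_{N}$ is the carrying capacity, i.e.\ the maximal attainable cumulative worm burden, so $S_{n}\le S_{N}$ for every $n$. A bounded, nondecreasing real sequence converges by the monotone convergence theorem for sequences; call the limit $L$. To identify $L$ with $S_{N}$ I would invoke the stability estimate (\ref{3}): for every $g>0$ there is an index $N$ with $\left|S_{n}-S_{N}\right|<g$ whenever $n\ge N$. Letting $n\to\infty$ gives $\left|L-S_{N}\right|\le g$, and since $g>0$ is arbitrary, $L=S_{N}$.

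Finally, the continuity supplied by Theorem \ref{theorem1} lets me transfer this limit of real numbers back through $F_{1}$: since $F_{1}$ sends the increasing family of intervals to the sequence $S_{n}$ and is continuous at the limiting interval $\bigcup_{x}(0,t_{x}+\delta)=\left(-\infty,t_{N}+\delta\right)$ (the first Remark), the value of $F_{1}$ at the limit equals $\lim_{n}S_{n}$, which we have shown equals $S_{N}$. This yields exactly the stated equality.

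I expect the main obstacle to be the justification of monotonicity and, especially, of the boundedness bound $S_{n}\le S_{N}$. The integrand $H_{i}\Lambda_{i}$ encodes a \emph{net} growth $\Lambda_{i}$ that need not be sign-definite, so monotonicity of $S_{n}$ is not automatic from positivity of a measure; it rests on the modeling assumption that cumulative production saturates. Likewise, the carrying-capacity bound is an interpretive hypothesis of the logistic set-up rather than a consequence of the integral's form, so the rigorous content of the theorem is really carried by condition (\ref{3}), with Theorem \ref{theorem1} ensuring that the passage to the limit is legitimate.
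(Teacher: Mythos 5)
Your proposal is correct within the paper's framework, but it takes a genuinely different route from the paper's own proof, which is a single sentence: the result follows by the Lebesgue monotone convergence theorem ``because $F_{1}$ is monotonic function and measurable.'' The difference matters. Applied literally, Lebesgue's monotone convergence theorem to the integrands $H_{i}\Lambda_{i}\,\chi_{[0,t_{x}+\delta]}$ would only yield convergence of the cumulative integrals $S_{x}$ to the integral over $[0,\infty)$; it cannot by itself identify the limit with $S_{N}$ for a \emph{fixed finite} $N$, and that identification is the actual content of the statement. Your argument supplies exactly this missing step by invoking the stability condition (\ref{3}), and it replaces the measure-theoretic theorem with the elementary monotone-bounded-sequence theorem. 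Note, though, that your final appeal to continuity of $F_{1}$ (Theorem \ref{theorem1}) is redundant: condition (\ref{3}) asserts $\left|S_{n}-S_{N}\right|<g$ for every $g>0$ whenever $n\geq N$, which forces $S_{n}=S_{N}$ for all $n\geq N$, so the sequence is eventually constant and the limit is immediate without any monotonicity, boundedness, or continuity input at all. What the paper's route buys is brevity; what yours buys is the identification of the limit, which the paper's one-liner glosses over. You are also right to flag---as the paper does not---that monotonicity of $S_{n}$ is not automatic, since $\Lambda_{i}$ is a \emph{net} growth rate and need not be nonnegative; this hypothesis is needed equally by the paper's invocation of the Lebesgue theorem and by your sequence argument, so neither proof is rigorous without it, and the honest conclusion you draw (that the theorem's rigorous content rests on condition (\ref{3})) is the correct one.
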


\begin{proof}
By using Lebesgue monotonic convergence theorem, we can rove this
result because, $F_{1}$ is monotonic function and measurable. 
\end{proof}

\section{Net Production Rates within and outside human host }

We define net productive rates for helminth in this section. We assume
that there are two sets of counting we do here, one is growth in the
number of helminth population within human host and second is contribution
of this human host to outside environment in the life time. We also
assume that initial age distribution of helminths in human host is
known. Suppose, $M^{1}(t)$ be the population of helminth within a
human host at time $t$, $M_{0}^{1}$ be the initial population, $K$
is carrying capacity, and $r$ is growth rate, then under the logistic
growth rate, we can express, $M^{1}(t),$ as 

\[
M^{1}(t)=\frac{M_{0}^{1}Ke^{rt}}{K+M_{0}^{1}(e^{rt}-1)}
\]

solving for growth rate, $r,$ we get,

\begin{equation}
r=\frac{1}{t}log\left[\frac{M^{1}(t)(M_{0}^{1}-K)}{M_{0}^{1}(M^{1}(t)-K)}\right]\label{eq:4(logistic growth)}
\end{equation}

Suppose, $M^{1}(a,0)$ is initial helminth population at age $a$
within an host is known, then using $\rho(a+da,0)$, the survival
probability that a group of worms at age $a$ will survive until age
$a+da$, we can obtain age distribution of worms at age $a+da$ and
at time $da$ by,

\begin{equation}
M^{1}(a,0)\rho(a+da,0)=M^{1}(a+da,da)\label{eq:5}
\end{equation}

Note that, $\int_{0}^{\infty}M^{1}(a,0)da=M_{0}^{1}$. Let $s$ is
the time at point of inflection of logistic growth or we assume at
$s$, we will have $M^{1}=K/2.$ We obtain $M^{1}(a,,s)$ and $M^{1}(a,T^{*})$
for some $T^{*}>s$ using growth rate in equation \ref{eq:4(logistic growth)}.
Using these two population age structures at times $s$ and $T^{*}$,
we obtain effective worm population, $M^{*}(a,T)$ in the life time
of human host (where $s<T<T^{*}$). We define net rate of production,
$\mathcal{R}$ within a human host as,
\begin{equation}
\mathcal{R}=\int_{0}^{\infty}\left\{ L(a,T)/M^{*}(a,T)\right\} \rho(a,T)da\label{eq:6}
\end{equation}

In the equation \ref{eq:6}, $L(a,T)$ denotes number of eggs produced
by worms of age $a$ at $T.$ In case of direct availability of rate
of egg bearing at age $a$ by a female worm, say, $f(a,T)$ then we
can replace it for the ratio $L(a,T)/M^{*}(a,T)$ in the equation
\ref{eq:6}, and modify it as 

\begin{equation}
\mathcal{R}=\int_{0}^{\infty}f(a,T)\rho(a,T)da\label{eq:6-1}
\end{equation}

\section{Impact of Chemotherapy}

We establish few results when chemotherapy is introduced into the
host population suffering with helminth and capture the dynamics.
Suppose the chemotherapy is introduced at $t_{N}+\delta$ and $\epsilon_{i}(x,t)$
be the net production rate of worms in the host population in age
$x$ at time $t$ (due to chemotherapy it is assumed that the net
number of worms produced per host is negative because there are less
number of worms produced than they are removed), then the worm numbers
in the $i^{th}$ sub-population during $\left(t_{N}+\delta,t_{N+1}+\delta\right)$
i.e. $\int_{t_{N}+\delta}^{t_{N+1}+\delta}\int_{0}^{\omega}H_{i}\left(x,s\right)\epsilon\left(x,s\right)dxds$
starts reducing until they eliminated. The exponential growth rate,
$r_{c(N)}$ until the time $t$ can be computed as,

\begin{align}
r_{c(N)}\left(t_{N}+\delta-t\right)\int_{0}^{t}\int_{0}^{\omega}H_{i}\left(x,s\right)\epsilon\left(x,s\right)dxds & =\label{eq:3.1}\\
 & -\int_{0}^{t_{N+1}+\delta}\int_{0}^{\omega}H_{i}\left(x,s\right)\epsilon\left(x,s\right)dxds
\end{align}

It is not necessary to introduce chemotherapy at time stability time
point i.e at $t_{N}+\delta$ and chemotherapy could be introduced
at time $t_{j}+\delta$ for $j=1,2,...N$ after the initial phase
of detection of worms during $\left(0,t\right).$ By taking all such
populations and population weights $k_{i}(x,t)$, we obtain below
equation, which we call equation for the nested growth of the worm
population.

\begin{eqnarray}
\sum_{j=1}^{N}\left[\frac{r_{j(c)}(t_{j}+\delta-t)}{k_{j}(x,t)}\left\{ \int_{0}^{t}\int_{0}^{\omega}H_{i}\left(x,s\right)\epsilon\left(x,s\right)dxds\right\} \right] & +\nonumber \\
+\sum_{j=1}^{N}\int_{0}^{t_{j}+\delta-t}\int_{0}^{\omega}H_{i}\left(x,s\right)\epsilon\left(x,s\right)dxds & = & 0\label{eq:3.2}
\end{eqnarray}

\begin{thm}
$ $$F_{2}$ \emph{is a }measurable function\emph{, where $F_{2}$
is defined as,}

\begin{equation}
F_{2}:\left(-\infty,t_{x}+\delta\right)\rightarrow\sum_{j=1}^{N}\int_{0}^{t_{j}+\delta-t}\int_{0}^{\omega}H_{i}\left(x,s\right)\epsilon\left(x,s\right)dxds\label{eq:5-1-1}
\end{equation}

for $x=1,2,3,...$ and $\delta>0.$
\end{thm}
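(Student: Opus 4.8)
The plan is to prove that $F_2$ is measurable by the same template that worked for $F_1$ in Theorem~\ref{theorem1}, namely by exhibiting $F_2$ as a continuous (indeed monotone) map between suitable sets and then invoking the fact that continuity implies measurability. First I would make precise the domain and codomain: $F_2$ assigns to each interval $\left(-\infty,t_{x}+\delta\right)$ the real number $\sum_{j=1}^{N}\int_{0}^{t_{j}+\delta-t}\int_{0}^{\omega}H_{i}\left(x,s\right)\epsilon\left(x,s\right)\,dx\,ds$, and I would note that, exactly as in the first theorem, this is a map from the family $\left\{\left(-\infty,t_{x}+\delta\right):x=1,2,3,\dots\right\}$ into the corresponding family of cumulative-worm values. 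The key structural observation is that the inner double integral is the same kind of iterated integral appearing in $F_1$, but now the integrand $H_i(x,s)\epsilon(x,s)$ is a net \emph{production} rate that is negative under chemotherapy, so the cumulative quantity is monotone decreasing rather than increasing in the upper limit.

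Next I would carry out the continuity computation in direct parallel with the proof of Theorem~\ref{theorem1}. For an arbitrary index $k$, I would evaluate the limit
\begin{equation}
\lim_{\left(-\infty,t_{x}+\delta\right)\to\left(-\infty,t_{k}+\delta\right)}F_{2}=\sum_{j=1}^{N}\int_{0}^{t_{k}+\delta-t}\int_{0}^{\omega}H_{i}\left(x,s\right)\epsilon\left(x,s\right)\,dx\,ds,
\end{equation}
using the fact that a finite sum of continuous maps is continuous and that each summand depends continuously on the endpoint $t_x+\delta$ through the limit of integration. Since the upper limits $t_j+\delta-t$ do not involve $x$ directly except through the nesting convention, I would be careful to state exactly how the endpoint parameter enters each term of the sum. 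Having established that the limit equals the value of $F_2$ at the limiting interval, I would conclude that $F_2$ is continuous, and hence measurable, closing the argument.

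The main obstacle, and the step I would scrutinise most, is the interplay between the finite sum over $j$ and the limit in $x$: unlike $F_1$, which is a single double integral, $F_2$ is a sum of $N$ terms whose upper limits $t_j+\delta-t$ are indexed separately from the domain variable $x$. I would need to pin down whether the continuity is meant term-by-term in the endpoint $t_x+\delta$ (with $N$ fixed) or whether the sum itself is being treated as the object varying with $x$; the cleanest reading is that each term is continuous in its endpoint and the finite sum inherits continuity, so no uniform-convergence or dominated-convergence machinery is required. A secondary point worth flagging is the sign of $\epsilon$: because the integrand is negative, $F_2$ is monotone \emph{decreasing}, which means that if one later wants a monotone-convergence statement analogous to the second theorem, the monotonicity direction must be tracked, though for mere measurability the direction is immaterial. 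Once the indexing is made explicit, the proof reduces to the same two lines as Theorem~\ref{theorem1}: compute the limit, observe it matches the value at the limit point, and invoke continuity $\Rightarrow$ measurability.
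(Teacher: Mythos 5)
Your proposal takes essentially the same route as the paper: the paper's own proof consists of the single remark that the result follows by the same arguments as Theorem \ref{theorem1}, and you simply write those arguments out explicitly, computing the limit of $F_{2}$ over the intervals $\left(-\infty,t_{x}+\delta\right)$ and invoking continuity to conclude measurability. Your added observations about the finite sum over $j$ and the negative sign of $\epsilon$ are sensible bookkeeping but do not change the argument, so the two proofs coincide in substance.
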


\begin{proof}
We can prove this theorem with similar arguments as in the proof of
theorem \ref{theorem1}. 
\end{proof}

\section{Discussion}

We have derived formulae which can be used to compute the worm densities
within a host and within the affected community. We have proved theoretical
results of the functional forms derived for the net reproduction rates.
Theoretical results derived indicates that the carrying capacities
of worms within a host are measurable functions, which will help to
understand bounds of the worm densities.

\end{document}